\DeclareMathOperator{\Pkchi}{P_k(\chi)}
\newenvironment{lenumerate}[2][]
{\begin{enumerate}[label=(#2\arabic*),leftmargin=0.2in,itemindent=0.15in,#1]}
{\end{enumerate}}
\setlist*[enumerate,1]{label={\itshape\arabic*)}}
\newcommand{\paragraphswithstop}{%
\let\copyparagraph\paragraph%
\renewcommand\paragraph[1]{\copyparagraph{##1.}}%
}
\newsavebox{\boxifnotempty}
\newcommand{\displayifnotempty}[3]{\sbox\boxifnotempty{#2}\setbox0=\hbox{\usebox{\boxifnotempty}\unskip}%
\ifdim\wd0=0pt
\else
 #1\usebox{\boxifnotempty}#3%
\fi%
}
\newcommand{\ifempty}[2]{\setbox0=\hbox{#1\unskip}%
\ifdim\wd0=0pt%
 #2%
\fi%
}
\newcommand{\ifnotempty}[2]{\setbox0=\hbox{#1\unskip}%
\ifdim\wd0>0pt%
 #2%
\fi%
}
\newcommand*\newstoreddef[1]{
  \BeforeClosingMainAux{%
    \immediate\write\@auxout{%
      \string\restoredef{#1}{\csname #1\endcsname}%
    }%
  }%
}
\newcommand*{\restoredef}[2]{
  \expandafter\gdef\csname stored@#1\endcsname{#2}%
}
\newcommand*{\storeddef}[1]{
  \@ifundefined{stored@#1}{0}{\csname stored@#1\endcsname}%
}
\newcommand{\real}[1]{\mathbb{R}^{#1}{}}
\DeclarePairedDelimiter{\norm}{\lVert}{\rVert}
\newcommand{\de}{\mathrm{d}}
\newcommand{\dert}[1][]{\frac{\de #1}{\de t}}
\newcommand{\vct}[1]{\mathbf{#1}}
\DeclareMathOperator{\stack}{stack}
\providecommand{\cE}{\mathcal{E}}
\providecommand{\cF}{\mathcal{F}}
\providecommand{\cG}{\mathcal{G}}
\providecommand{\cV}{\mathcal{V}}
\providecommand{\cX}{\mathcal{X}}
\newcommand{\newcolorlabel}[2]{%
  \expandafter\newcommand\csname #1\endcsname[1]{%
    \colorbox{#2}{\color{white}\textsf{\textbf{##1}}}}%
}
\newcommand{\newcommenter}[2]{%
  \expandafter\newcommand\csname #1\endcsname[1]{%
    \fcolorbox{#2}{#2}{\color{white}\textsf{\textbf{#1}}}
    {\color{#2}##1}}%
  \expandafter\newcommand\csname at#1\endcsname{%
    \fcolorbox{#2}{#2}{\color{white}\textsf{\textbf{@#1}}}
    {\color{#2}}}%
  \expandafter\newcommand\csname #1hl\endcsname[2]{%
    \colorbox{#2}{\color{white}\textsf{\textbf{#1}}}\sethlcolor{Azure2}\hl{##2}~%
    \expandafter\ifx\csname commentarrow\endcsname\relax$\leftarrow$\else \commentarrow[#2]\fi~%
    {\color{#2}##1}}%
  \expandafter\newcommand\csname #1st\endcsname[2]{%
    \colorbox{#2}{\color{white}\textsf{\textbf{#1}}}\sout{##2}~%
    \expandafter\ifx\csname commentarrow\endcsname\relax$\leftarrow$\else \commentarrow[#2]\fi~%
    {\color{#2}##1}}%
}
\tikzset{
  dim above/.style={to path={\pgfextra{
        \pgfinterruptpath
        \draw[>=latex,|->|] let
        \p1=($(\tikztostart)!1.5em!90:(\tikztotarget)$),
        \p2=($(\tikztotarget)!1.5em!-90:(\tikztostart)$)
        in(\p1) -- (\p2) node[pos=.5,sloped,above]{#1};
        \endpgfinterruptpath
      }
    }
  },
  dim double above/.style={to path={\pgfextra{
        \pgfinterruptpath
        \draw[>=latex,|->|] let
        \p1=($(\tikztostart)!3em!90:(\tikztotarget)$),
        \p2=($(\tikztotarget)!3em!-90:(\tikztostart)$)
        in(\p1) -- (\p2) node[pos=.5,sloped,above]{#1};
        \endpgfinterruptpath
      }
    }
  },
  dim below/.style={to path={\pgfextra{
        \pgfinterruptpath
        \draw[>=latex,|->|] let 
        \p1=($(\tikztostart)!-1em!-90:(\tikztotarget)$),
        \p2=($(\tikztotarget)!-1em!90:(\tikztostart)$)
        in (\p1) -- (\p2) node[pos=.5,sloped,below]{#1};
        \endpgfinterruptpath
      }
    }
  },
}
\tikzset{
    right angle quadrant/.code={
        \pgfmathsetmacro\quadranta{{1,1,-1,-1}[#1-1]}     
        \pgfmathsetmacro\quadrantb{{1,-1,-1,1}[#1-1]}},
    right angle quadrant=1, 
    right angle length/.code={\def\rightanglelength{#1}},   
    right angle length=2ex, 
    right angle symbol/.style n args={3}{
        insert path={
            let \p0 = ($(#1)!(#3)!(#2)$) in     
                let \p1 = ($(\p0)!\quadranta*\rightanglelength!(#3)$), 
                \p2 = ($(\p0)!\quadrantb*\rightanglelength!(#2)$) in 
                let \p3 = ($(\p1)+(\p2)-(\p0)$) in  
            (\p1) -- (\p3) -- (\p2)
        }
    }
}
\newcommand{\pgfextractangle}[3]{%
    \pgfmathanglebetweenpoints{\pgfpointanchor{#2}{center}}
                              {\pgfpointanchor{#3}{center}}
    \global\let#1\pgfmathresult  
}
\newcommand{\commentarrow}[1][Azure4]{\tikz[baseline=-3pt]{\node[shape border uses incircle, fill=#1,rotate=180,single arrow, inner sep=1pt, minimum size=6pt, single arrow head extend=2pt]{};}}
\tikzset{ax/.style={-latex,line width=2pt}}
\tikzset{camera/.style={fill=Sienna1,fill opacity=0.5},%
image plane/.style={draw=RoyalBlue3,line width=2pt}}
\newrobustcmd*{\mysquare}[1]{\tikz{\filldraw[draw=#1,fill=none] (0,0)
rectangle (0.2cm,0.2cm);}}
\newrobustcmd*{\mycircle}[1]{\tikz{\filldraw[draw=#1,fill=none] (0,0) circle [radius=0.1cm];}}
\newrobustcmd*{\mytriangle}[1]{\tikz{\filldraw[draw=#1,fill=#1] (0,0) --
(0.2cm,0) -- (0.1cm,0.2cm);}}
\newrobustcmd*{\mycross}[1]{\tikz{\draw[color=#1] (0,0) --(0.2cm,0.2cm);\draw[color=#1] (0.2cm,0) --(0,0.2cm);}}
\title{\LARGE \bf
Bearing-Based Formation Control with Optimal Motion Trajectory
}
\author{Zili Wang$^{}$, Sean B. Andersson$^{}$, and Roberto Tron$^{}$
\thanks{Z.~Wang is with the Systems Engineering Division. S.B.~Andersson and R.~Tron are with the Systems Engineering Division and Mechanical Engineering Department at
        Boston University, Boston 02215,
        {\tt\small \{zw2445,sanderss,tron\}@bu.edu}.}%
\thanks{This work was supported in part by NSF through ECCS 1931600 and NRI 1734454 and through a grant from the Center for Information and Systems Engineering and the Boston University College of Engineering.}
}
\begin{document}

\maketitle
\thispagestyle{empty}
\pagestyle{empty}

\begin{abstract}
     Bearing-based distributed formation control is attractive because it can be implemented using vision-based measurements to achieve a desired formation. Gradient-descent-based controllers using bearing measurements have been shown to have many beneficial characteristics, such as global convergence, applicability to different graph topologies and workspaces of arbitrary dimension, and some flexibility in the choice of the cost. In practice, however, such controllers typically yield convoluted paths from their initial location to the final position in the formation. In this paper we propose a novel procedure to optimize gradient-descent-based bearing-based formation controllers to obtain shorter paths. Our approach is based on the parameterization of  the cost function and, by extension, of the controller. We form and solve a nonlinear optimization problem with the sum of path lengths of the agent trajectories as the objective and subject to the original equilibria and global convergence conditions for formation control. Our simulation shows that the parameters can be optimized from a very small number of training samples (1 to 7) to straighten the trajectory by around 16\% for a large number of random initial conditions for bearing-only formation. However, in the absence of any range information, the scale of the formation is not fixed and this optimization may lead to an undesired compression of the formation size. Including range measurements avoids this issue and leads to further trajectories straightening by 66\%.
\end{abstract}

\section{Introduction}
The goal of multi-agent formation control is to use distributed control to drive a number of agents to a desired geometric pattern. The advantages of such a system include the possibility of controlling a large network by a single operator, and the robustness of the system to failures of a single agent. As a results, formation control approaches have been widely applied to a variety of applications including surveillance, exploration, and transportation \cite{Anderson:RALC2008,Michael:AR2011,Petitti:ICRA2016}.

Compared to the extensively studied distance-based formation control \cite{Oh:Automatica2015}, bearing-based formation methods are appealing since relative bearing measurements are easy to obtain from an on-board camera and they are often more reliable than distance measurements, especially when the agents are relatively far away.

\textbf{Review of prior work.}
While there has been significant work in formation control, we limit ourselves to the progress in bearing-based formation control. The initial work of distributed bearing-based formation control traces back at least two decades  \cite{Bishop:CDC11,Bishop:CCA13,Bishop:JRNC2015}. However, the distance corresponding to each bearing measurement is required by the control law. The control method proposed in \cite{Franchi:IJRS2012} provided fast and straight trajectories using zero or one distance measurement, but it relied either on special graph topologies based on two leader agents, or on distributed estimators that virtually realize the measurements of such topologies. A more recent method was based on the gradient control law that does not depend on the distances, with the stability analysis relying on the state of the entire network evolving on a sphere \cite{Zhao:ITAC2016,Zhao:CSM2019}.

Following the above work, another gradient-based control law that requires only relative bearing measurements, but can also handle optional distance measurements was proposed in \cite{Tron:CDC2016} (which extended an earlier work on the visual homing method from \cite{Tron:ICRA15}). The control is based on the gradient of a Lyapunov function; global convergence is guaranteed by imposing constraints on this function. Advantages of this method include the fact that it can cover arbitrary numbers of agents, graph topologies and workspace dimensions. 
Several works have extended \cite{Tron:CDC2016} to more settings including second order dynamics \cite{Tron:CDC2018}, double integrator and unicycle dynamics \cite{Zhao:ICCA18, Zhao:TAC2019}, directed acyclic graphs and directed cycle graphs \cite{Karimian:ACC20}. All these works are based on a particular choice of the Lyapunov function defining the gradient-based controller, despite the fact that the stability conditions in the original work \cite{Tron:CDC2016} allow some flexibility in this regard. 

Other prior work has established a solid theoretical footing for the approach, establishing the concept of bearing rigidity theory \cite{Zhao:CSM2019} and developing bearing based control laws \cite{Bishop:JRNC2015,Zhao:ITAC2016,Tron:CDC2016,Zhao_bearingOnly:CSL2021}. However, a notable gap in the previous literature is that, while ensuring global convergence, there is no attempt to characterize, let alone optimize, the length and smoothness of the trajectories of the multi-agent system during their transitory phase. Typical trajectories of this controller are shown in the examples of Fig.~\ref{fig:5agents_1par_init},~\ref{fig:fuller_5agents_1par_init}.

A natural objective, then, is to optimize the path length of the agents, ideally reaching straight lines (i.e, the shortest possible path). Such a problem is not straightforward since each agent has very limited information about the global state of the network: it only observes its neighbors, moreover, it may observe only the directions, which is a nonlinear function of the state. 
As such, it is not trivial to find a controller that guides the agents along the shortest linear paths. In this paper, we focus on finding controllers that minimize the agents' path lenght for gradient-based bearing-based formation control with optional range measurements.

\begin{figure}[htbp]
 \centering
  \subfloat[]{\includegraphics[width=4.2cm]{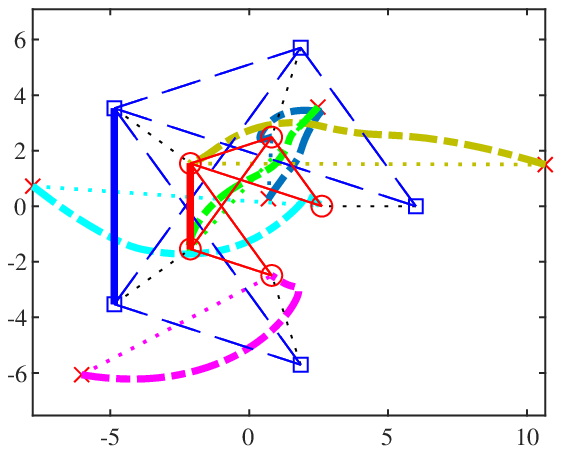}\label{fig:5agents_1par_init}}\quad
  \subfloat[]{\includegraphics[width=4.05cm]{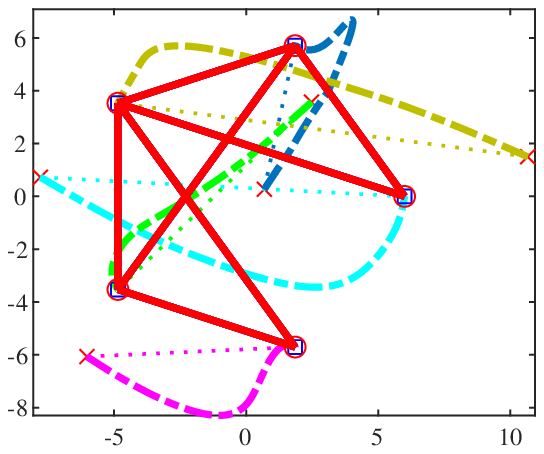}\label{fig:fuller_5agents_1par_init}}\quad

  \subfloat[]{\includegraphics[width=4.2cm]{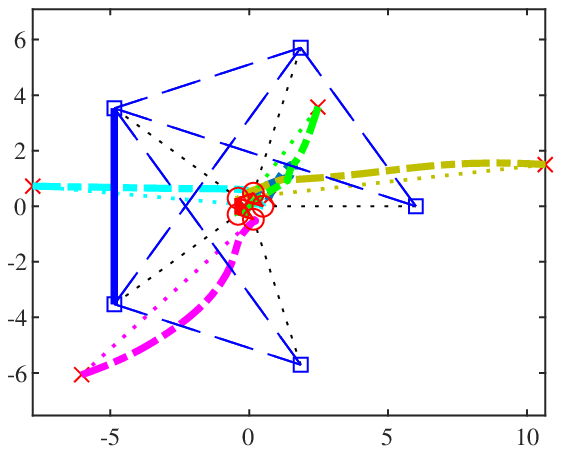}\label{fig:5agents_1par_opt}}\quad
  \subfloat[]{\includegraphics[width=4.05cm]{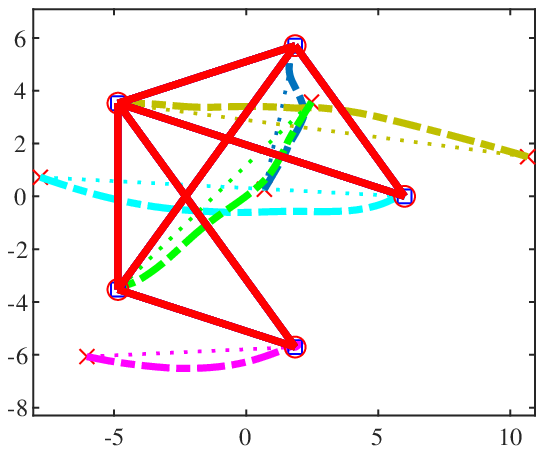}\label{fig:fuller_5agents_1par_opt}}\quad
  \caption{(a)(c): bearing-only 5-agent sytem where $E_d=\emptyset$, (b)(d): bearing-based 5-agent system where $E_b=E_d$. Agents' given desired formation \mysquare{blue} are moving via thick dashed lines from initial \mycross{red} to desired \mycircle{red} locations. First row is before the optimization and second row is after. } 
  \label{fig:5agents_1par}
\end{figure}

\textbf{Paper contributions.}
Our key contributions are:
\begin{itemize}
    \item We modify the convergence proof in \cite{Tron:CDC2016} to relax the conditions on the control cost for bearing-only formation and thereby expand the class of functions that can be used in the controller that guarantee global convergence.
    \item While maintaining global convergence, we formulate a constrained nonlinear optimization problem to find parameters for the Lyapunov function that leads to straighter trajectories. 
    \item We apply the optimization to bearing-based controllers and highlight an undesirable collapse effect that appears when using minimum path length as the objective. We further show that by including any number of range measurements in the controller, this effect can be avoided. 
\end{itemize}
To evaluate the optimization results, we use two performance metrics: percentage of improvement in trajectory length relative to the original path, and percentage of improvement in trajectory length difference relative to an ideal straight-line path. Through simulations, we demonstrate that for these two metrics, our method improves the performance by approximately 8\% and 16\%, respectively,  in a simulated 5-agent bearing-only network. Including additional range measurements leads to an improvement of approximately 11\% and 66\% respectively. We also show that even with a small number of training initial conditions, the improvements generalize to (1) a far larger number of testing initial conditions, (2) a different number of agents, and (3) various target formation shape.

\section{Preliminaries} 
We denote the dimension of the work-space with $n$. We assume that agents acquire bearing measurements in their own local reference frame, but that all the local reference frames are rotationally aligned (equivalently, the agents know their rotation with respect to a common global frame \cite{Leonardos:CDC2019}), and all the quantities discussed are expressed in a common global inertial reference frame (e.g., by assuming the availability of a compass for each agent). The operator $\stack(v_1,\dots,v_m) = \stack(\{v_i\}_{i=1}^m)$ returns the vector obtained by vertically stacking the arguments.
\subsection{Graph Theory}
The interaction topology of a multi-agent system is modeled as a directed graph $\mathcal{G}=(\mathcal{V},\mathcal{E})$, where $\mathcal{V}$ denotes the set of nodes, $\mathcal{E}\subseteq \mathcal{V}\times \mathcal{V}$ denotes the edges of ordered pairs of the nodes, and the set of neighbors of $i\in \mathcal{V}$ is $\mathcal{N}_i:=\{j\in \mathcal{V}: (i,j)\in \mathcal{E}\}$. A graph is undirected if for every $(i,j)\in \mathcal{E}$, $(j,i)$ is also in $\mathcal{E}$. We assume all graphs are undirected (as it will be seen from the definitions in the following section, measurements for the edge $j\to i$ can be easily obtained from those for edge $i\to j$ via communication).
\subsection{Formations and Measurements} \label{section2: formation definitions}
We represent the set of $N$ agents as $\cV=\{1,\dots,N\}$ and the corresponding location of each agent as $\{x_i\}_{i\in \cV}$. We define the \emph{range} between nodes $i,j\in \cV$ as
\begin{equation}
    d_{ij}(x_i,x_j)=\norm{x_j-x_i},\label{eq:distance}
\end{equation}
where $\|\cdot\|$ denotes the Euclidean norm. The \emph{bearing direction} is defined as
\begin{equation}
    \beta_{ij}(x_i,x_j)=d_{ij}^{-1}(x_j-x_i).\label{eq:bearing}
\end{equation}

A bearing-based formation is defined as $\cF = (\cG,\vct{x})$ where
\begin{itemize}
    \item $\vct{x}=\stack(\{x_i\}_{i\in \cV})$ is the \emph{configuration} of the formation that gives the location of each agent in $\mathbb{R}^n$.
    \item $\cG = (\cV,\cE_b,\cE_d)$ is a graph in which $\cE_b$ contains the set of pairs $(i,j)$ where agent $i$ can observe $\beta_{ij}$ and $\cE_d$ contains the set of pairs that observe $d_{ij}$. We assume that $\cE_d \subset \cE_b$.
\end{itemize}
$\cF$ is a bearing-only formation if $\cE_d=\emptyset$.
 The complete set of bearings and ranges of a formation is denoted by the vector  $\boldsymbol\beta=\stack(\{\beta_{ij}\}_{(i,j)\in \cE_b})$, $\mathbf{d}=\stack(\{d_{ij}\}_{(i,j)\in \cE_d})$. Two formations $(\mathcal{G},\vct{x})$ and $(\mathcal{G},\vct{x'})$, are said to be
\begin{itemize}
    \item \emph{equivalent} if they yield the same measurements, $\boldsymbol{\beta}\!=\!\boldsymbol{\beta'}$, $\mathbf{d} = \mathbf{d'}$.
    \item \emph{congruent} if they have the same shape and scale, $x_i'=x_i+t$ with $t\in\mathbb{R}^n$ and for all $i \in \cV$.
    \item \emph{similar} if they have the same shape, $x_i'=\gamma x_i+t, \gamma>0$, with $t\in\mathbb{R}^n$ and for all $i \in \cV$.
\end{itemize}

A formation $(\cG,\vct{x})$ is rigid if every configuration equivalent to it is also similar (for bearing-only) or congruent (for bearing+range) to it. In this paper, we assume that all formations are rigid. 

\subsection{Gradient-based Formation Control}\label{section:gradient-based control}
We take a simple kinematic motion model,
\begin{equation} \label{eq:motion}
  \vct{\dot{x}} = \vct{u},
\end{equation}
and follow the derivation of the distributed gradient-based controller from \cite{Tron:CDC2016}. The control is defined as
\begin{equation} \label{eq:control}
  \vct{u} = -\frac{\partial \varphi(\vct{x})}{\partial x},
\end{equation}
where the cost function $\varphi$ for a bearing-only formation is
\begin{align} 
  &\varphi(\vct{x}) = \varphi_b(\vct{x}) = \sum\limits_{(i,j) \in \cE_b} \varphi_{ij}^b (x_i, x_j), \label{eq:control cost}\\
  &\varphi_{ij}^b(x_i,x_j)=d_{ij}f_b(c_{ij}), \label{eq:edge cost}\\
  &c_{ij}(x_i,x_j) = \beta_{g,ij}^T \beta_{ij}
    = \cos(\angle(\beta_{g,ij}, \beta_{ij})),
\end{align}
where $\angle(\cdot,\cdot)$ denotes the angle between two vectors, $c_{ij}$ is referred to as a \emph{bearing similarity} between the current bearing $\beta_{ij}$ and desired bearing $\beta_{g,ij}$, and $f_b(c_{ij})$ as a \emph{bearing reshaping function}. In Sec.~\ref{section:fb conditions}, we derive a set of conditions on the bearing reshaping function to ensure global stability of the control law to the desired formation that are less restrictive than those found in \cite{Tron:CDC2016}.
The cost function \eqref{eq:control cost} is a summation over the edges $\cE_b$, with each term being a monotonic function of the similarity between  the current and desired bearing. 
The control law \eqref{eq:control} for any specific agent $i \in \cV$ can be written as
\begin{equation}
    u_i = -\sum\limits_{j:(i,j) \in \cE_b} g^b_{ij} (x_i, x_j), \label{eq:agent cost}
\end{equation}
where $g^b_{ij}$  denotes the gradient of \eqref{eq:edge cost} and is given by the following (a detailed derivation can be found in \cite{Tron:ICRA15}):
\begin{equation}
    g_{ij}^b = -f_b(c_{ij})\beta_{ij} - f'_b(c_{ij})(I_n-\beta_{ij}\beta_{ij}^T)\beta_{g,ij} \label{eq:bearing cost gradient},
\end{equation}
where $I_n\in\mathbb{R}^{n\times n}$ denotes an identity matrix, and $f'_b(c_{ij})$ is the derivative of $f_b$ evaluated at $c_{ij}$. Note that although the unmeasured distance for each edge appears in the cost function, it does not appear in the bearing-only control law.

The bearing-based controller adds the range measurements to the cost function, yielding
\begin{align} 
  &\varphi(\vct{x}) = \beta_b \sum\limits_{(i,j) \in \cE_b} \varphi_{ij}^b (x_i, x_j) + \beta_d \sum\limits_{(i,j) \in \cE_d} \varphi_{ij}^d (x_i, x_j), \label{eq:control cost with range}\\
  &\varphi_{ij}^d(x_i,x_j)=f_d(q_{ij}), \label{eq:edge cost with range}\\
  &q_{ij}(x_i,x_j) =d_{ij}c_{ij}-d_{g,ij},
\end{align}
where $q_{ij}$ is the \emph{range similarity} between the current and desired range. The conditions on the \emph{range reshaping function} $f_d(q_{ij})$ to ensure global convergence of the agents to the desired formation using the cost \eqref{eq:control cost with range} are: (1) $f_d(q)> 0$, $f_d(0)=0$, (2) $\text{sign}(f'_d(q))=\text{sign}(q)$, (3) $f''_d(0)>0$ (see, e.g. \cite{Tron:CDC2016}).

\begin{remark}\label{remark:discontinuity}
    At the point where the controller is undefined (i.e. when $x_i=x_j$), the edge cost \eqref{eq:edge cost} becomes $0$ by continuity and one of the subgradient is $\vct{0}_{2n}$.
\end{remark}

\section{Conditions for Bearing-only Global Convergence}\label{section:fb conditions}
We focus primarily on the bearing-only approach and provide the first contribution of the paper, namely a less restrictive set of constraints on the bearing reshaping function relative to those of \cite{Tron:CDC2016} that ensure global convergence. First, we want to ensure that the gradient-based bearing-only controller converges to equilibria where the configuration is similar (in the sense defined in Sec.~\ref{section2: formation definitions}) to the desired one. This implies that every minimum of $\varphi(\cdot)$ should be a global minimum (i.e., there are no spurious local minima). Following the reasoning of \cite{Tron:CDC2016}, such a condition can be achieved by requiring certain properties on the individual reshaping function $f_b(\cdot)$; an advantage of this approach is that then convergence conditions will be independent of the graph topology. Below, we derive a set of conditions on $f_b(\cdot)$ that is significantly less restrictive than the one provided in \cite{Tron:CDC2016}. Specifically, we simply require that $f_b(\cdot)$ is non-negative and monotonically decreasing:
\begin{assumption}\label{assumption:constraints}
The bearing reshaping function $f_b(c_{ij}): [-1,1] \rightarrow \mathbb{R}$ has the following properties 
\begin{align} 
    &f_b(1) = 0, \label{eq:bearing constraints_original1}\\
    &f'_b(c_{ij})= 
        \begin{cases}
           \leq 0,& \text{and finite for } c_{ij} = 1,\\
            < 0,              & \text{otherwise}.
        \end{cases} \label{eq:bearing constraints_original2}
\end{align}
\end{assumption}
Assumption \ref{assumption:constraints} is similar to that of \cite{Tron:CDC2016} but we have removed the condition $f_b(c_{ij}) + (1-c_{ij})f'_b (c_{ij}) \leq 0$, which significantly increases the set of functions that can be chosen.

We now need to establish that the set of global minima of $\varphi^b(\cdot)$ corresponds to the goal of formation control:

\begin{lemma}\label{lemma1}
    The cost function $\varphi^b(\cdot)$ is non-negative everywhere and has a global minimizer at configurations $\vct{x}$ that are similar to the desired configuration $\vct{x_g}$.
\end{lemma}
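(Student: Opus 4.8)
The plan is to exploit the product structure $\varphi_{ij}^b = d_{ij}\, f_b(c_{ij})$, in which the first factor is a norm and the second is controlled entirely by Assumption~\ref{assumption:constraints}. First I would record the elementary consequence that $f_b$ is non-negative on all of $[-1,1]$: by \eqref{eq:bearing constraints_original2} we have $f_b'(c)\le 0$ throughout, so $f_b$ is monotonically non-increasing, and since $f_b(1)=0$ by \eqref{eq:bearing constraints_original1} it follows that $f_b(c)\ge f_b(1)=0$ for every $c\in[-1,1]$. Because $c_{ij}=\cos(\angle(\beta_{g,ij},\beta_{ij}))\in[-1,1]$ and $d_{ij}=\norm{x_j-x_i}\ge 0$, each summand $\varphi_{ij}^b(x_i,x_j)=d_{ij}f_b(c_{ij})$ is a product of two non-negative numbers, hence non-negative; summing over $\cE_b$ gives $\varphi^b(\vct{x})\ge 0$ for all $\vct{x}$, which is the first claim and simultaneously shows that $0$ is a lower bound for $\varphi^b$.

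It then remains to exhibit configurations attaining the value $0$. I would take an arbitrary $\vct{x}$ similar to $\vct{x_g}$, i.e.\ $x_i=\gamma x_{g,i}+t$ for some $\gamma>0$ and $t\in\real{n}$, and observe that bearings are invariant under such a transformation: $x_j-x_i=\gamma(x_{g,j}-x_{g,i})$, so $d_{ij}=\gamma\, d_{g,ij}$ and $\beta_{ij}=\beta_{g,ij}$, whence $c_{ij}=\beta_{g,ij}^T\beta_{g,ij}=1$ and $f_b(c_{ij})=f_b(1)=0$ on every edge. Therefore $\varphi^b(\vct{x})=\sum_{(i,j)\in\cE_b}\gamma\, d_{g,ij}\, f_b(1)=0$. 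Combined with the lower bound of the first paragraph, every such $\vct{x}$ — in particular $\vct{x_g}$ itself — is a global minimizer of $\varphi^b$, which proves the second claim.

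The argument is short, and the only point requiring care is the logical scope of the statement: the lemma asserts that configurations similar to $\vct{x_g}$ \emph{are} global minimizers, not that they are the only ones. I would add a brief remark recording the partial converse that will be used later when ruling out spurious equilibria, namely that $\varphi^b(\vct{x})=0$ forces, on each edge, either $x_i=x_j$ (so $d_{ij}=0$) or $c_{ij}=1$ — the latter because $f_b'(c)<0$ for $c\neq 1$ together with $f_b(1)=0$ makes $f_b$ strictly positive on $[-1,1)$ — so that, away from the degenerate collapsed configurations, the rigidity assumption of Sec.~\ref{section2: formation definitions} identifies the zero set of $\varphi^b$ with the set of configurations similar to $\vct{x_g}$. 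I do not anticipate a genuine obstacle here; the substantive observation is simply that removing the condition $f_b(c_{ij})+(1-c_{ij})f_b'(c_{ij})\le 0$ of \cite{Tron:CDC2016} affects neither non-negativity of $\varphi^b$ nor the location of its zero set, which is exactly what this lemma isolates.
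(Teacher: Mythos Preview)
Your argument is correct. The paper itself does not give a proof of Lemma~\ref{lemma1}; it simply cites \cite[Lemma~1]{Tron:CDC2016}. Your derivation is exactly the elementary argument one expects that citation to contain: monotonicity of $f_b$ together with $f_b(1)=0$ gives non-negativity of each summand, and invariance of the bearings under positive scaling and translation shows that every configuration similar to $\vct{x_g}$ attains the lower bound $0$. Your added remark on the partial converse (that $\varphi^b(\vct{x})=0$ forces $d_{ij}=0$ or $c_{ij}=1$ on each edge, and then rigidity pins down the zero set) is not part of the lemma as stated but is a useful bridge to Proposition~\ref{prop:1}; it is also correct.
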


The proof of this lemma can be found in \cite[Lemma 1]{Tron:CDC2016}. We then show the key property for proving global convergence:
\begin{proposition} \label{prop:1}
        The cost function $\varphi^b(\cdot)$ has only global minimizers at configurations similar to $\vct{x_g}$, and there are no other critical points.
\end{proposition}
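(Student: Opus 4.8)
The plan is to characterize all critical points of $\varphi^b(\cdot)$ and show that the only ones are the global minimizers described in Lemma~\ref{lemma1}. At a critical point, $u_i = 0$ for every agent $i$, i.e., $\sum_{j:(i,j)\in\cE_b} g^b_{ij}(x_i,x_j) = 0$ for all $i\in\cV$, where $g^b_{ij}$ is given by \eqref{eq:bearing cost gradient}. The standard device (following \cite{Tron:CDC2016}) is to contract these stationarity equations against the configuration itself: form $\sum_{i\in\cV} x_i^T u_i = 0$, which after reindexing over edges becomes a sum over $(i,j)\in\cE_b$ of terms $(x_j - x_i)^T g^b_{ij}$ (using undirectedness to pair the $i$-term and $j$-term of each edge). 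I would compute this edge-wise quantity explicitly: writing $x_j - x_i = d_{ij}\beta_{ij}$ and using $\beta_{ij}^T(I_n - \beta_{ij}\beta_{ij}^T) = 0$, the second term in $g^b_{ij}$ drops out and each edge contributes $-d_{ij}f_b(c_{ij})$. Hence $\sum_{(i,j)\in\cE_b} d_{ij} f_b(c_{ij}) = 0$ at any critical point; but this is exactly $\varphi^b(\vct{x})$, which by Lemma~\ref{lemma1} is non-negative with each summand non-negative (since $d_{ij}\ge 0$ and $f_b \ge 0$ by Assumption~\ref{assumption:constraints}, as $f_b$ is monotonically decreasing with $f_b(1)=0$). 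Therefore every summand vanishes: for each edge, either $d_{ij}=0$ or $f_b(c_{ij})=0$, and the latter forces $c_{ij}=1$ by the strict monotonicity \eqref{eq:bearing constraints_original2}, i.e., $\beta_{ij}=\beta_{g,ij}$.

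Next I would argue that $d_{ij}=0$ cannot occur on a connected rigid formation unless the whole thing is degenerate — or, more carefully, handle the collapsed case separately. If $d_{ij}=0$ for some edge, then $x_i = x_j$; one then has to check whether the stationarity conditions can be consistently satisfied. The cleaner route: on each edge where $d_{ij}>0$ we have shown $\beta_{ij}=\beta_{g,ij}$, so the configuration restricted to the "non-collapsed" part of the graph matches all desired bearings. Combined with rigidity (every configuration realizing the desired bearings is similar to $\vct{x_g}$), this pins down the shape. The remaining task is to rule out genuinely collapsed equilibria: if a subset of agents coincides at a point, I would plug back into $u_i = 0$ and show the gradient contributions cannot cancel — intuitively, all neighbors outside the collapsed cluster pull the cluster apart in a way that is not balanced, because each $g^b_{ij}$ with $\beta_{ij}\ne\beta_{g,ij}$ has a nonzero component along the "correcting" direction $(I_n - \beta_{ij}\beta_{ij}^T)\beta_{g,ij}$ that the $-f_b(c_{ij})\beta_{ij}$ term (pointing along $\beta_{ij}$) cannot offset. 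This uses the subgradient convention from Remark~\ref{remark:discontinuity} for edges internal to the cluster.

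Finally, having shown that the only critical points are configurations with $\boldsymbol\beta = \boldsymbol\beta_g$ (equivalently, by rigidity, similar to $\vct{x_g}$), Lemma~\ref{lemma1} already tells us these are global minimizers, so there are no critical points that are not global minimizers — which is the claim.

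\textbf{Main obstacle.} The delicate part is not the contraction argument (that is a short computation) but the treatment of collapsed or partially-collapsed configurations where some $d_{ij}=0$: the controller is non-smooth there, so "critical point" must be read in the subgradient sense of Remark~\ref{remark:discontinuity}, and one must verify that no choice of subgradients can make such a configuration stationary unless it is the trivial all-agents-coincide point — and then argue (using rigidity / the structure of $\vct{x_g}$) that this degenerate point is excluded or is itself similar to $\vct{x_g}$ only in trivial cases. I expect this case analysis, rather than the main identity $\sum d_{ij}f_b(c_{ij})=0$, to carry the real weight of the proof, and it is also where the relaxed Assumption~\ref{assumption:constraints} must be shown to still suffice (the removed condition $f_b + (1-c)f'_b \le 0$ was presumably used in \cite{Tron:CDC2016} precisely to control a second-derivative / Hessian-type quantity, so I need to confirm it plays no role in the first-order critical-point characterization).
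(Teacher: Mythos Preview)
Your contraction argument is correct and is essentially the same computation the paper carries out, just packaged differently. The paper's route is geometric: Lemma~\ref{lemma2} evaluates $\varphi^b_{ij}$ along a radial line $(\tilde x_i(t),\tilde x_j(t))=(x_0+tv_i,\,x_0+tv_j)$ emanating from a common point, observes that the bearing (hence $c_{ij}$) is constant along such a line, and obtains $\dot{\tilde\varphi}^b_{ij}=f_b(c_{ij})\,\lVert v_j-v_i\rVert\ge 0$. Summing over edges and evaluating at $t=1$ gives a strictly positive directional derivative at any $\vct{x_0}$ not similar to $\vct{x_g}$, so the gradient cannot vanish there. Your identity $\sum_i x_i^{\!T}u_i=-\sum_{(i,j)}d_{ij}f_b(c_{ij})=-\varphi^b(\vct{x})$ is exactly the Euler homogeneous-function identity underlying that line computation (since $\varphi^b$ is translation-invariant and positively homogeneous of degree~$1$), so the two arguments coincide at the level of the key inequality; yours is simply the more direct algebraic statement of it.

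Two remarks on the parts you flagged. First, the paper does \emph{not} perform the careful collapsed-configuration case analysis you sketch; it handles $x_i=x_j$ only via a one-line Remark noting that such an edge contributes zero to the Lie derivative and that the argument survives provided some other edge contributes positively. So what you identify as the ``main obstacle'' is in fact treated more lightly in the paper than in your plan, and your proposed analysis would be strictly more thorough. Second, your suspicion about the removed condition $f_b(c)+(1-c)f_b'(c)\le 0$ is correct: it plays no role whatsoever in this first-order argument. Both the paper's Lemma~\ref{lemma2} and your contraction use only $f_b\ge 0$ with equality iff $c=1$, which follows immediately from Assumption~\ref{assumption:constraints}; the dropped condition was needed in \cite{Tron:CDC2016} only because their version of the line argument started from distinct points rather than a common one.
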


The claim is the same as \cite[Prop.~1]{Tron:CDC2016}, but we now establish it under the conditions of Assumption~\ref{assumption:constraints}.
Since $\varphi(\cdot)$ is generally non-convex, in order to prove Prop. \ref{prop:1}, we first provide a lemma that evaluates the cost function on a parametric line starting from a common point $x_0$ and moving in arbitrary directions, showing that is increasing except in the direction of the desired bearing $\beta_{g,ij}$.
In the following statements, we use the notation $\tilde{\cdot}$ to indicate a function evaluated along a curve $\tilde{x}$.
\begin{lemma}\label{lemma2}
    Define the line $(\Tilde{x}_i(t),\Tilde{x}_j(t))=(x_{0}+t v_i,x_{0}+tv_j)$, where $x_0$ is an arbitrary point, and $v_i,v_j\in \mathbb{R}^n$ are arbitrary directions. The derivative of the function 
  \begin{equation}
      \Tilde{\varphi}_{ij}^b(t)=\varphi_{ij}^b(\Tilde{x}_i(t),\Tilde{x}_j(t))
  \end{equation}
  satisfies the following
  \begin{equation}
    \begin{aligned}
    \dot{\Tilde{\varphi}}_{ij}^b
        \begin{cases}
           \equiv 0,& \text{if } c_{ij}(x_i,x_j) = 1,\\
            > 0,    & \text{otherwise}.
        \end{cases}
    \end{aligned}
  \end{equation}
\end{lemma}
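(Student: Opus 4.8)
The plan is to exploit the fact that both endpoints of the line issue from the \emph{same} point $x_0$: the relative displacement is $\tilde{x}_j(t)-\tilde{x}_i(t)=t\,(v_j-v_i)$, which is always parallel to the fixed vector $w:=v_j-v_i$. Consequently, along the ray $t>0$ the bearing direction $\beta_{ij}$ is the \emph{constant} unit vector $w/\norm{w}$, and hence so is the bearing similarity $\bar c:=\beta_{g,ij}^{T}w/\norm{w}$. This single observation reduces the whole claim to signing a scalar; everything else is bookkeeping, and only Assumption~\ref{assumption:constraints} is used (not the stronger condition of \cite{Tron:CDC2016}).

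Concretely, I would first dispose of the trivial case $v_i=v_j$, i.e.\ $w=0$: then $\tilde{x}_i(t)\equiv\tilde{x}_j(t)$, the range $d_{ij}$ vanishes identically, and by the continuity convention of Remark~\ref{remark:discontinuity} the edge cost is identically $0$, so $\dot{\tilde{\varphi}}_{ij}^b\equiv 0$. Assuming now $w\neq 0$ and restricting to $t>0$, I would substitute $d_{ij}(t)=t\norm{w}$ and $c_{ij}(t)=\bar c$ into \eqref{eq:edge cost} to obtain
\begin{equation}
\tilde{\varphi}_{ij}^b(t)=d_{ij}(t)\,f_b\bigl(c_{ij}(t)\bigr)=\norm{w}\,f_b(\bar c)\,t,
\end{equation}
which is linear in $t$, so that $\dot{\tilde{\varphi}}_{ij}^b=\norm{w}\,f_b(\bar c)$ is a constant. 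It then remains to sign this constant using Assumption~\ref{assumption:constraints}. Since $\bar c$ is a cosine, $\bar c\in[-1,1]$. If $\bar c=1$ (the direction $w$ is aligned with the desired bearing), then $f_b(\bar c)=f_b(1)=0$ by \eqref{eq:bearing constraints_original1}, so $\dot{\tilde{\varphi}}_{ij}^b\equiv 0$ (in fact $\tilde{\varphi}_{ij}^b\equiv 0$). If $\bar c<1$, then \eqref{eq:bearing constraints_original2} makes $f_b$ strictly decreasing on $[-1,1]$, hence $f_b(\bar c)>f_b(1)=0$, and together with $\norm{w}>0$ this gives $\dot{\tilde{\varphi}}_{ij}^b=\norm{w}\,f_b(\bar c)>0$. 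Finally, since $\bar c$ is precisely the bearing similarity $c_{ij}$ of the configuration $(\tilde{x}_i(t),\tilde{x}_j(t))$ at any $t>0$, the two cases match those in the statement.

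The argument involves no real difficulty; the points warranting care are: (i) the statement concerns the ray $t>0$ emanating from $x_0$ (for $t<0$ the bearing flips to $-w/\norm{w}$ and the slope changes sign), which is exactly the regime in which Lemma~\ref{lemma2} is invoked when a configuration is contracted toward the common point $x_0$ in the proof of Proposition~\ref{prop:1}; and (ii) one must use the \emph{strict} inequality $f_b'(c)<0$ on $[-1,1)$ — equivalently, integrate it from $\bar c$ to $1$ — to conclude $f_b(\bar c)>0$ rather than merely $f_b(\bar c)\ge 0$, which is what upgrades the conclusion from ``non-decreasing'' to ``strictly increasing.''
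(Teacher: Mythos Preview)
Your proposal is correct and follows essentially the same approach as the paper: both exploit that $\tilde{x}_j(t)-\tilde{x}_i(t)=t(v_j-v_i)$ keeps the bearing (and hence $c_{ij}$) constant along the ray, reducing $\tilde{\varphi}_{ij}^b(t)=d_{ij}(t)f_b(c_{ij})$ to a linear function of $t$ whose slope is signed by Assumption~\ref{assumption:constraints}. Your write-up is in fact slightly more direct than the paper's, which first passes through an explicit 2D coordinate normalization before reaching the same computation; you also handle the degenerate case $v_i=v_j$ and the restriction $t>0$ more explicitly.
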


The proof of this lemma is given in Appendix \ref{Appendix1}. The concept is similar to \cite[Lemma 1]{Tron:CDC2016}, except that here we have $\tilde{x}_i$ and $\tilde{x}_j$ start from the same location rather than requiring an offset between them. With this lemma, we can prove Prop.~\ref{prop:1}:

\begin{proof}
    Let $\vct{x_g}$ be consistent with $\mathcal{F}$ and bearings $\vct{\beta_g}$ (i.e. $\vct{\beta}(\vct{x_g})=\vct{\beta_g}$). We can define a parametric line $\Tilde{\vct{x}}(t) = \vct{x_g}+t(\vct{x_0}-\vct{x_g})$, where $\vct{x_0}=\Tilde{\vct{x}}(1)$, $\vct{x_0}\neq\vct{x_g}$, is an arbitrary configuration. By linearity, we have
    \begin{equation}
        \dert \varphi^b(\Tilde{\vct{x}}(t))\Bigg|_{t=1} = \sum_{(i,j)\in \cE_b} \dert\varphi_{ij}^b(\Tilde{\vct{x}}(t))\Bigg|_{t=1}.
    \end{equation}
    Lemma \ref{lemma2} shows that each term on the right hand side is non-negative, and zero if and only if $\beta_{ij}=\beta_{g,ij}$, that is, at configurations similar to $\vct{x_g}$. We deduce that the Lie derivative of $\varphi^b$ in the direction $\vct{v}=\vct{x}_0-\vct{x}_g$ at $\vct{x}_0$ , $\frac{\partial}{\partial x}\varphi^b(\vct{x_0}) \vct{v}=\dert \varphi^b(\Tilde{\vct{x}}(t))\big|_{t=1}$, is strictly positive. Hence $\frac{\partial}{\partial x}\varphi^b(\vct{x_0})\neq 0$ and the configuration $\vct{x_0}$ is not a critical point unless $\vct{x_0}$ is similar to $\vct{x_g}$.
\end{proof}

From the above statements, we provide our result of global convergence on the proposed controller. 

\begin{theorem}
    Every trajectory of the closed-loop system
    \begin{equation}
        \vct{\dot{x}}(t) = -\frac{\partial}{\partial x} \varphi^b(\vct{x}(t))
    \end{equation}
    asymptotically converges to a configuration $\vct{x}$ similar to the desired configuration $\vct{x_g}$.
\end{theorem}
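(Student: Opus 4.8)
The plan is to use $\varphi^b$ itself as a Lyapunov function for the closed-loop flow $\vct{\dot x}=-\partial\varphi^b/\partial x$ and to invoke LaSalle's invariance principle, then read off the limit set from Proposition~\ref{prop:1}. By Lemma~\ref{lemma1}, $\varphi^b(\cdot)\ge 0$ and it vanishes exactly on the set $\mathcal S$ of configurations similar to $\vct{x_g}$. Away from the collision set $\mathcal C=\{\vct x: x_i=x_j\text{ for some }(i,j)\in\cE_b\}$ the cost is continuously differentiable, so along a trajectory
\begin{equation}
  \dert\varphi^b(\vct x(t)) = \Big\langle \frac{\partial}{\partial x}\varphi^b(\vct x(t)),\, \vct{\dot x}(t)\Big\rangle = -\Big\|\frac{\partial}{\partial x}\varphi^b(\vct x(t))\Big\|^2 \le 0 ,
\end{equation}
hence $\varphi^b$ is non-increasing and $\vct x(t)$ stays in the sublevel set $\Omega=\{\vct x:\varphi^b(\vct x)\le\varphi^b(\vct x(0))\}$; this is the basic Lyapunov ingredient.

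Next I would establish that trajectories are precompact so that LaSalle applies. Since each $\varphi_{ij}^b=d_{ij}f_b(c_{ij})$ depends only on the relative vector $x_j-x_i$, $\varphi^b$ is translation invariant, so $\sum_i\partial\varphi^b/\partial x_i=0$ and the centroid $\frac{1}{N}\sum_i x_i$ is stationary along the flow; we may therefore restrict to the centroid-fixed subspace. On it I would show that $\Omega$ has bounded intersection with the complement of every neighborhood of $\mathcal S$: an edge with $c_{ij}<1$ contributes $d_{ij}f_b(c_{ij})$ with $f_b(c_{ij})>0$ by Assumption~\ref{assumption:constraints}, so a configuration in $\Omega$ either keeps all inter-agent distances bounded or drives every bearing similarity to $1$, and in the latter case rigidity forces the configuration toward $\mathcal S$. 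Combined with the fact that $\varphi^b$ is bounded away from $0$ outside any neighborhood of $\mathcal S$, this confines the trajectory to a compact set.

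With precompactness in hand, LaSalle gives convergence to the largest invariant set contained in $\{\vct x\in\Omega:\partial\varphi^b/\partial x=0\}$. Proposition~\ref{prop:1} identifies the critical points of $\varphi^b$ as exactly the configurations similar to $\vct{x_g}$, and each such configuration is an equilibrium of the flow, so that invariant set is $\mathcal S$. Since $\varphi^b(\vct x(t))$ is monotone and bounded it converges, and a standard argument on the $\omega$-limit set — which is connected, contained in $\mathcal S$, and consists of equilibria — shows $\vct x(t)$ approaches a single configuration in $\mathcal S$, i.e.\ one similar to $\vct{x_g}$, which is the claim.

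The main obstacle is the precompactness/limiting-behavior step, for two reasons specific to the bearing-only setting. First, the scale of the formation is unconstrained, so $\mathcal S$ is unbounded and sublevel sets are not compact; ruling out escape to infinity requires the precise structure of $d_{ij}f_b(c_{ij})$ together with rigidity, rather than a soft compactness argument. Second, $\varphi^b$ fails to be differentiable on $\mathcal C$, and by Remark~\ref{remark:discontinuity} the fully collapsed configuration (all agents coincident at the centroid) is a Filippov equilibrium at which $\varphi^b=0$ even though it is \emph{not} similar to a non-degenerate $\vct{x_g}$; one must therefore argue that the flow is neither trapped on $\mathcal C$ nor convergent to this collapsed configuration — e.g.\ by exhibiting a persistent descent direction there, or by tracking a scale functional such as $\sum_{(i,j)\in\cE_b}d_{ij}^2$ and showing it cannot vanish. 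It is precisely this scale robustness of the base controller that the parameter optimization of the later sections can forfeit.
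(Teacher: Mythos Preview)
Your approach is correct and is exactly the Lyapunov--LaSalle strategy that underlies the result; the paper itself does not spell out the argument but simply invokes \cite[Theorem~1]{Tron:CDC2016}, noting that the only modification is that Proposition~\ref{prop:1} now rests on the new Lemma~\ref{lemma2} (and hence on the weaker Assumption~\ref{assumption:constraints}) rather than on the stronger lemmas in the original reference. In other words, once Proposition~\ref{prop:1} is secured under the relaxed conditions, the convergence proof is taken over verbatim from the prior work, and your sketch reconstructs that prior proof faithfully: $\varphi^b$ is a nonnegative Lyapunov function with $\dot\varphi^b=-\|\nabla\varphi^b\|^2$, the centroid is conserved, and the only critical points are the desired similar configurations.

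The obstacles you isolate in your final paragraph are exactly the delicate points that the referenced proof must address, and your diagnosis is accurate. The scale freedom does make sublevel sets non-compact, so the argument cannot proceed by plain compactness of $\Omega$; one route (close to what you suggest) is to show that the radial derivative of $\varphi^b$ along the ray from the centroid is nonnegative, which bounds the growth of scale along trajectories and, together with the bearing control, yields precompactness on the centroid-fixed slice. The collapse issue is likewise genuine and is acknowledged in the paper only at the level of Remark~\ref{remark:discontinuity}; a clean resolution uses the same Lie-derivative computation behind Proposition~\ref{prop:1} to show that the descent direction $\vct x_0-\vct x_g$ persists through the non-smooth set, so the Filippov flow cannot stall there. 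You have not filled these gaps, but neither does the present paper --- it outsources them to \cite{Tron:CDC2016} --- and your identification of them is precisely what a careful reader should flag.
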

\begin{proof}
     The claim is a restatement of \cite[Theorem 1]{Tron:CDC2016} but using our less restrictive conditions on $f_b(c_{ij})$ based on Lemma~\ref{lemma2} instead of \cite[Lemma 2]{Tron:CDC2016} and \cite[Lemma 3.4]{Tron:ICRA15}. Due to space limitations, we do not provide a proof here.
\end{proof}

\begin{remark}
    We note that points where $x_i=x_j$ can lead to a zero in the corresponding component of the Lie derivative of $\varphi^b$; so long as other edges in the controller have nonzero derivative this does not pose an issue.
\end{remark} 

\section{Nonlinear Optimization of Formation Control}
In this section, we give the details of our second contribution. We formulate a constrained nonlinear optimization problem using a combination of the path length of the agent trajectories and the control cost at terminal time. This choice seeks to minimize the trajectory length while enforcing a common convergence time. By describing the bearing reshaping functions in a parameterized form (with a similar parameterization for the range reshaping functions when they are included), we obtain a parameter optimization problem, subject to the constraints in Assumption~\ref{assumption:constraints}.
Then we use the sensitivity function to find the derivative of the objective function, which can be used in a nonlinear optimization solver.
\subsection{Problem Definition}
The objective function for our problem is
\begin{equation}\label{eqn:objective function}
    L(\vct{\alpha},\vct{x}_0) = \left. \sum_{i\in \cV}\int^T_0 ||\dot{x}_i (\vct{x}(t),\vct{\alpha)}|| \de t+\omega\varphi(\vct{x}(T)) \right|_{\vct{x}(0)=\vct{x}_0},
\end{equation}
where $\vct{x}_0$ is the initial condition on the agents, $\vct{\alpha} = \alpha_b$ are the parameters defining the reshaping functions $f_b(c_{ij},\alpha_b)$
, $0$ and $T$ are the starting and terminal time, and $\omega$ is a weight to balance the two terms. (If the bearing-based control is being used, then $\alpha$ will also include the parameters $\alpha_d$ for the range-reshaping functions.)

In general, the optimal $\alpha$ is dependent on the initial conditions of the system. To achieve good performance across a range of initial conditions, we optimize over a finite number of randomly selected initial conditions, defined as the set $\cX_0$.
Our problem is then
\begin{equation} \label{eqn: otpimization problem}
\begin{aligned}
    & \underset{\vct{\alpha}}{\min}
    & & \sum_{\vct{x}_0\in \cX_0}L(\vct{\alpha},\vct{x}_0) \\
    & \text{subj. to}
    & & (\ref{eq:motion}), (\ref{eq:control}), (\ref{eq:bearing constraints_original1}), (\ref{eq:bearing constraints_original2}),
\end{aligned}
\end{equation}
where \eqref{eq:motion} and \eqref{eq:control} are the motion model and control law by using parametrized $f_b(c_{ij},\alpha)$, \eqref{eq:bearing constraints_original1} and \eqref{eq:bearing constraints_original2} are the required conditions on $f_b(\cdot)$. The conditions on $f_d(\cdot)$ are required if we also parameterize the range terms.

\subsection{Function Interpolators}\label{section:interpolator}
There are many approaches to parametrize a function $f(\chi,\alpha_b)$ defined on $[-1,1]$, given a set of control points located on a regular grid $1=\chi_1>\chi_2>\dots>\chi_K=-1$ with grid size $h = \chi_{k+1} - \chi_k$ ($k\in\{ 1,2,\dots,K-1\}$). We consider piecewise second order polynomial interpolators here for two reasons: (1) they remain numerically stable with respect to the number of grid points, (2) while the piecewise linear interpolation requires the smallest number of parameters and the least computation effort, in general it cannot guarantee a continuous first derivative $f'(\cdot)$.  
A general quadratic polynomial can ensure continuous differentiability on the interval, with linear constraints \eqref{eq:bearing constraints_original1} and \eqref{eq:bearing constraints_original2}.

Given a quadratic function $Q_k(\cdot)$ defined on the interval $[\chi_k, \chi_{k+1}]$ and passing through the control points,
\begin{align}
     Q_k(\chi) &= a_k^0 + a_k^1 (\chi - \chi_k) + a_k^2 (\chi - \chi_k)^2, \\
    Q_k(\chi_k) &= a_k^0 = f(\chi_k,\alpha_b),
\end{align}
continuity of the first and second derivatives results in the following constraints on the vector of coefficients $\vct{a}=\stack(\{a^r_k)_{k\in\{1,\ldots,K\}}^{r\in\{0,1,2\}})$:
\begin{equation}\label{eq:value constraint}
    a_{k+1}^0 = Q_{k+1}(\chi_{k+1}) = Q_k(\chi_{k+1})= a_k^0 + a_k^1 h + a_k^2 h^2,
\end{equation}
\begin{equation}\label{eq:derivative constraint}
a_{k+1}^1 = Q'_{k+1}(\chi_{k+1}) = Q'_k(\chi_{k+1}) = a_k^1 + 2 a_k^2 h,
\end{equation}
for $k\in\{ 1,2,\dots,K-1\}$~\cite{rburden11:numerical}. The coefficients can thereby be written as $\vct{a}=F\alpha_b$, where $\alpha_b\in\real{K+1}$ is a minimal set of parameters, and $F$ is a matrix such that $\vct{a}$ satisfies \eqref{eq:value constraint} and \eqref{eq:derivative constraint} for any $\alpha_b$.

The constraint $f_b(\chi_1,\alpha_b)=0$ from \eqref{eq:bearing constraints_original1} implies $a_1^0=0$. The reshaping function can then be represented as
\begin{align}
    f_b(\chi,\alpha_b) &= \sum^{k-1}_{i=1}(a_i^1 h + a_i^2 h^2)  + a_k^1 (\chi-\chi_k) + a_k^2 (\chi-\chi_k)^2 \nonumber\\
    &\triangleq \Pkchi^T F \alpha_b \label{eq:interpolator}
\end{align}
on the interval $[\chi_k, \chi_{k+1}]$,
where $\Pkchi$ is an appropriate vector of polynomials of  $h$ and $\chi$ up to order two. 

Given the above parametrization of the reshaping function $f_b$, it is possible to show that the constraints (\ref{eq:bearing constraints_original1}) and (\ref{eq:bearing constraints_original2}) are satisfied  for all $\chi\in[-1,1]$ if and only if the following conditions are satisfied on the grid points $\{\chi_k\}$
\begin{align}
     f_b(\chi_1, \alpha_b) &= 0,\\
    f'_b(\chi_1,\alpha_b) &\leq 0, \\
    f'_b(\chi_k,\alpha_b) &< 0\; \forall k\in\{2,\ldots,K-1\}.
\end{align}
Noting that these constraints are all linear in $\alpha_b$.

\begin{remark}\label{remark:cost weight}
The objective function \eqref{eqn:objective function} depends on the choice of the weighting parameter $\omega$. The optimization result, however, is insensitive to the specific value chosen. A detailed explanation is provided in Appendix \ref{Appendix2}. Here we use $\omega=1000$ in our simulation.
\end{remark}

\subsection{Derivative of the Objective Function}\label{section:objective derivation}
Since the objective function \eqref{eqn:objective function} is nonlinear, we propose to use an off-the-shelf Sequential Quadratic Programming (SQP) solver, which, in our tests, has shown better convergence times than alternatives. At a high level, this method models the problem at the current approximate solution by a quadratic programming subproblem; then the solution of the subproblem is used to construct a better approximate solution. At each iteration of this process, the Hessian of the associated Lagrangian function is approximated using gradient information and quasi-Newton updates \cite{boggs1995sequential}. In this Section, we compute the analytical derivative of the objective function, which will be used by the solver instead of relying on the numerical approximations.

The objective function \eqref{eqn:objective function} is comprised of the total travelled distance and terminal cost. From the chain rule, the gradient of the travelled distance function with respect to $\vct{\alpha}$ is

 \begin{equation}\label{eq:derivative_obj1}
     \begin{split}
         \frac{\partial~}{\partial \alpha}\int_0^T||\dot{x}_i(\mathbf{x}(t), \mathbf{\alpha})|| dt &= \int_0^T \frac{\partial ||\dot{x}_i||}{\partial x_i}^T \frac{\partial\dot{x}_i}{\partial \alpha} ~dt \\
    &=\int_0^T \frac{ \dot{x}_i^T}{||\dot{x}_i||} \frac{\partial\dot{x}_i}{\partial \alpha} ~dt.
     \end{split}
 \end{equation}

For a bearing-only formation, given the single integrator dynamics \eqref{eq:motion} and control law \eqref{eq:agent cost}, and exchanging the order of the time and partial derivatives, then the last term under the integral is given by
\begin{equation}\label{eq:DAlpha}
    \dert \frac{\partial x_i(t,\mathbf{\alpha})}{\partial \alpha} = -\;\smashoperator{\sum_{j:(i,j)\in \cE_b}}\; \big(\left.\frac{\partial g^b_{ij}}{\partial x}\right|_{x=x(t,\mathbf{\alpha})}\hspace{-0.25cm}\frac{\partial x_i}{\partial \alpha}+\left.
    \frac{\partial g^b_{ij}}{\partial \alpha}\right|_{x=x(t,\mathbf{\alpha})}\big).
\end{equation}

The equation above defines an ODE in terms of the \emph{sensitivity function} \cite{khalil2002nonlinear} $S_i$, which is defined as:
\begin{align}\label{eq:sensitivity}
    \begin{split}
    &S_i(t) = \frac{\partial x_i(t,\mathbf{\alpha})}{\partial \alpha_b}.\\
    \end{split}
\end{align}

We can then rewrite  (\ref{eq:DAlpha}) as
\begin{equation}\label{eq:sensitivity derivative}
    \dot{S_i}
    =-\;\smashoperator{\sum_{j:(i,j)\in \cE_b}}\; \big(\left.\frac{\partial g^b_{ij}}{\partial x}\right|_{x=x(t,\mathbf{\alpha})}S_i +\left. \frac{\partial g^b_{ij}}{\partial \alpha}\right|_{x=x(t,\mathbf{\alpha})}\big),
\end{equation}
where, using~\eqref{eq:agent cost},
\begin{align}
    \frac{\partial g^b_{ij}}{\partial x} =& -d_{ij}^{-1}(f''_b(c_{ij},\mathbf{\alpha})(c_{ij} \beta_{ij} - \beta_{g,{ij}})\beta_{g,{ij}}^T \nonumber\\
    &+(f'_b(c_{ij},\mathbf{\alpha})c_{ij} - f_b(c_{ij},\mathbf{\alpha})) I_n) P_{\beta_{ij}}, \label{eq:dgdx}\\
    \frac{\partial g^b_{ij}}{\partial \alpha} =& -\frac{\partial f_b(c_{ij},\alpha)}{\partial \alpha}\beta_{ij} - \frac{\partial f'_b(c_{ij},\alpha)}{\partial \alpha} (I_n-\beta_{ij}\beta_{ij}^T)\beta_{g,ij}.
\end{align} 

In practice, \eqref{eq:motion} and \eqref{eq:sensitivity derivative} are solved simultaneously using a single ODE solver. Note that the dynamics for the sensitivity function defined above involve $f''_b(\cdot)$. This term is not well-defined for the piecewise-quadratic functions used here. To overcome this issue, we define $f''_b(\cdot)$ to be right continuous at each control point, and handle it as in Remark \ref{remark:discontinuity}. 
 
For the terminal cost term, from the control cost \eqref{eq:control cost} and the sensitivity function \eqref{eq:sensitivity}, the derivative of the terminal control cost in the objective function is
 \begin{equation}\label{eq:derivative2}
 \begin{split}
     \frac{\partial \varphi(\vct{x}(T))}{\partial \alpha} &= \frac{\partial \varphi_b}{\partial \alpha}+\frac{\partial \varphi_b}{\partial x}\frac{\partial x}{\partial \alpha} \\
    &=\sum_{(i,j)\in \cE_b}d_{ij}\frac{\partial f_b(c_{ij},\alpha)}{\partial \alpha}+\frac{\partial \varphi_b}{\partial x} S.
\end{split}
\end{equation}

\begin{remark}\label{remark:}
    When two agents meet, \eqref{eq:dgdx} becomes infinity. In practice, we split \eqref{eq:derivative_obj1} at the discontinuity point (i.e. when $x_i=x_j$) to get around the problem of exchange the order in multiple derivatives. In order to solve this issue, we approximate the derivative by stopping one step earlier before the meeting condition, propagating with Euler step to escape that region. See Appendix \ref{Appendix3} for a more detailed discussion.
\end{remark}

\subsection{Additional Range Terms}
Since a bearing-only controller is scale-invariant, the optimization process for the bearing reshaping functions may lead to a solution where the agents become arbitrarily close together. In order to avoid this undesirable ``collapse'', we can include
at least one range measurement. The process for including the range terms in the optimization is similar, except that the representation of $a_1^0$ in \eqref{eq:interpolator} is represented by elements in $\alpha_d$ and from partial conditions in \ref{section:gradient-based control}. The corresponding parameterized constraints of $f_d$ are
\begin{align}
   f_d(0, \alpha_d) &= 0,\\
  f'_d(0, \alpha_d) &= 0,\\
  f'_d(\chi_k,\alpha_d) &< 0, \forall \chi_k<0,\\
  f'_d(\chi_k,\alpha_d) &> 0, \forall \chi_k>0,\\
  a^1_1 \geq 0&, ~ a^2_1>0,\\
  a^1_K \leq 0&, ~ a^2_K>0.
\end{align}
These constraints ensure a parabolic-like shape of the distance reshaping function, with the minimum at the origin. The derivative of the objective function follows analogously to Sec.~\ref{section:objective derivation} and is omitted here for space reasons.

Note that we use the relaxed constraints on the bearing reshaping function in this setting as well; proving global convergence under these constraints when the range is included is a topic of ongoing work. 



\section{Simulations}\label{Section:results}
In this section, we validate our proposed optimization method using a 5-agent network in two dimensions. Through our simulations, we aim to address four questions: 
\begin{enumerate}
    \item how much can we improve the trajectories according to the proposed performance metrics?
    \item how does the improvement depend on the number of initial conditions in the set $\cX_0$ for the optimization?
    \item What can we conclude by comparing the bearing-distance formation to bearing-only formation?
    \item How do the results found for specific formations generalize over other formations?
    \item How does our controller performance compare with other methods?
\end{enumerate}
In the following, the initial position of the system is randomized and the desired formation is an equilateral polygon. 

\subsection{Training} \label{training}
Throughout, we initialize the optimization by discretizing the initial bearing shape function $f_b(c_{ij})=\tfrac{{\rm arc}\cos^2(c_{ij})}{2}$ and optional range reshaping function $f_d(q_{ij})=\tfrac{q_{ij}^2}{2}$ using seven control points and interpolate them with quadratic interpolation to determine the initial parameter $\alpha$.
We ran simulations using 1, 3, 5, or 7 different initial conditions in $\cX_0$ for the initial positions of the five agents. There are four cases: (1) without range terms ($\cE_d=\emptyset$, labeled NoEd), (2) with one range term ($|\cE_d|=1$, labeled OneEd), (3) with three range terms ($|\cE_d|<|\cE_b|$, SomeEd), and (4) with seven range terms ($|\cE_d|=|\cE_b|$, FullEd). Performance of the optimization clearly depends on the choice of which edges to include; we show selections that yielded the best results. Fig.~\ref{fig:fbPlot} shows the optimized bearing reshaping functions while Fig.~\ref{fig:fdPlot} shows the optimized range reshaping function. From Fig.~\ref{fig:fbPlot}, we see that the optimized bearing reshaping functions are relatively flat when from the desired bearing ($c_{ij}$ far from 1) and fall off to zero as that desired bearing is reached. The optimized range reshaping functions retain their basic quadratic shape, becoming shallower with fewer range measurements.

\begin{figure}[htbp]
  \centering
  \includegraphics{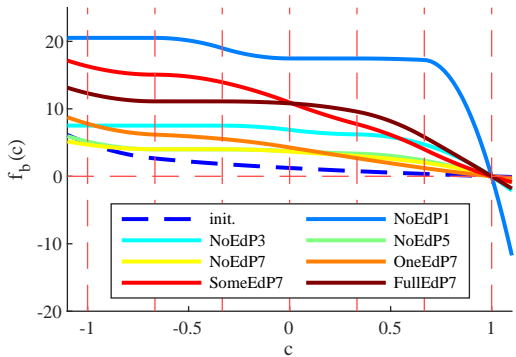}
  \caption{Bearing reshaping function before (dashed blue) and after (solid) optimization. \textasteriskcentered{}EdP\textasteriskcentered{} indicates the number of range edges and initial conditions. Vertical dashed lines show the control points for the interpolator.}
  \label{fig:fbPlot}
\end{figure}

\begin{figure}[htbp]
    \centering
    \includegraphics{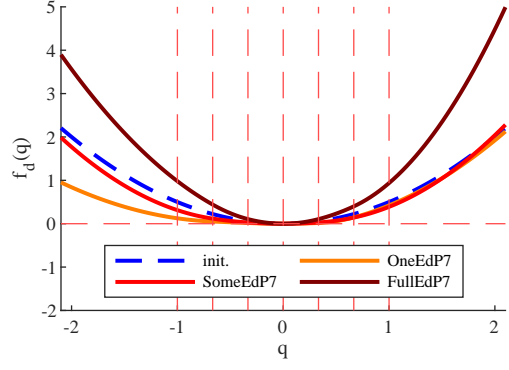}
    \caption{Range reshaping function before (dashed blue) and after (solid) optimization.}
    \label{fig:fdPlot}
  \end{figure}

\begin{remark}\label{remark:discontinuity point}
  Fig.~\ref{fig:5agents_1par} illustrates the change in trajectories our optimization yields. Note that in the bearing-only case (left images), the lack of a fixed scale led to an overall shrinking of the formation to reduce the total path length. This is avoided by including range terms (right images).
\end{remark}



\subsection{Performance evaluation} \label{test}
We tested the optimized reshaping function in Figs.~\ref{fig:fbPlot},\ref{fig:fdPlot} using 200 randomly selected initial conditions for the agents (called the Test set). Performance was evaluated using five metrics. 
The first was the path length
$L_{\textrm{path}}(\vct{\alpha},\vct{x}_0)$, defined by \eqref{eqn:objective function} without the terminal condition (see Remark \ref{remark:cost weight}). The second was the difference of the path length from a straight line, 
\begin{equation}\label{metrics:Ldiff}
    L_{\textrm{diff}}(\vct{\alpha},\vct{x}_0)=L_{\textrm{path}}(\vct{\alpha},\vct{x}_0)-\sum_{i\in\cV}\norm{x_i(T)-x_i(0)}.
\end{equation}
The other two were the corresponding percentage of relative improvement:
\begin{equation}\label{metrics:del_path}
    \delta path =100 \frac{ L_{\textrm{path}}(\vct{\alpha}_\textrm{init}) - L_{\textrm{path}}(\vct{\alpha}_\textrm{opt})}{L_{\textrm{path}}(\vct{\alpha}_\textrm{init})},
\end{equation}   
\begin{equation}\label{metrics:del_diff}
        \delta diff
        = 100 \frac{ L_{\textrm{diff}}(\vct{\alpha}_\textrm{init}) - L_{\textrm{diff}}(\vct{\alpha}_\textrm{opt})}{L_{\textrm{diff}}(\vct{\alpha}_\textrm{init})}.
\end{equation}

We also define scale of a formation as the standard deviation (STD) of the positions of the agents in the formation. 


\begin{table*}[ht]
\caption{Summary of performance. Results for $\delta{path}$ and $\delta{diff}$ are reported as mean values over the trials of the training and test data, with higher numbers indicating better performance. For the test data, the medians for these metrics are also reported in parentheses. In addition, the test data results include the mean for the path length, with lower numbers indicating better performance. The final column is the percentage of trials in the test data for which path length improved.
}

\label{table:evaluations}
\centering
\begin{tabular}{llllllllll}
  \toprule \multicolumn{1}{l}{\multirow{2}{*}{\hspace{8pt}Case\hspace{8pt}}}
  &\multicolumn{2}{l}{\multirow{2}{*}{\hspace{8pt} Model}}
  &\multicolumn{2}{c}{Training}
  &\multicolumn{5}{c}{Test} \\
  \cmidrule(lr){4-5} \cmidrule(l){6-10}
  &{}
  &{}
  &$\delta path$ (\%)%
  &$\delta diff.$ (\%)%
  &$L_\textrm{path,init}$%
  &$L_\textrm{path,opt}$%
  &$\delta path$(\%)%
  &$\delta diff.$(\%)%
  &$+$\% \\%
  \midrule
  \multirow{8}{2em}{Train~on~5  Test~on~5}
  &\multirow{4}{2em}{NoEd} 
  &1 ICs &\textcolor{blue}{\bf{24.35}}	&\textcolor{blue}{\bf{54.36}} 
   & 45.17 & 40.79	&\textcolor{blue}{\bf{8.72}} (9.46) &0.94 (11.28) &81 \\

  &{} &3 ICs &8.47	&20.61	
  & 45.17	&40.98	&8.6 (8.43)	&12.68 (16.05)	&87 \\

  &{} &5 ICs &7.43	&17.72	
  &45.17	&41.37 &7.82 (7.57)	&11.9	(15.84)	&87.5 \\

  &{} &7 ICs &7.26	&22.04	
  &45.17	&41.38	&7.87	(7.45)	&\textcolor{blue}{\bf{15.7}} (19.76)	&\textcolor{blue}{\bf{90}}\\
  \cline{2-10}
  &OneEd &7 ICs &3.75	&23.03	
  &52.39	&50.11	&4.09	(4.32)	&22.58	(24.05)	&90 \\

  &SomeEd &7 ICs &11.02	&53.67	
  &52.79	&47.82	&9.14	(9)	&49.42 (52.83)	&97.5 \\

  &FullEd &7 ICs &\textcolor{blue}{\bf{11.05}}	&\textcolor{blue}{\bf{61.15	}}
  &52.07	&46.13	&\textcolor{blue}{\bf{11.26}} (11.73)	&\textcolor{blue}{\bf{65.64}} (67.78)	&\textcolor{blue}{\bf{99.5}} \\
 
  \hline
  \multirow{4}{2em}{Train~on~5  Test~on~5  Alt.shape}
  &NoEd &7 ICs &7.26	&22.04	
  &45.41	&41.24	&8.8	(8.19)	&22.7 (21.9)	&95.5\\

  &OneEd &7 ICs &3.75	&23.03	
  &49.75	&47.64	&4.02	(4.25)	&21.56	(22.11)	&89 \\

  &SomeEd &7 ICs &11.02	&53.67	
  &50.4	&46.31	&7.91	(8.28)	&39.12 (39.99)	&94 \\

  &FullEd &7 ICs &\textcolor{blue}{\bf{11.05}}	&\textcolor{blue}{\bf{61.15	}}
  &47.98	&43.48	&\textcolor{blue}{\bf{9.16}} (9.52)	&\textcolor{blue}{\bf{56.55}} (59.62)	&\textcolor{blue}{\bf{99}} \\





  \hline
  \multirow{4}{2em}{Train~on~5  Test~on~3}
  &NoEd &7 ICs &7.26	&22.04	
  &21.53	&19.43	&8.76	(7.91)	&29.38 (37.79)	&93\\

  &OneEd &7 ICs &3.75	&23.03	
  &28.78	&26.93	&5.97	(6.25)	&40.5 (47.31)	&92.5 \\

  &SomeEd &7 ICs &11.02	&53.67	
  &28.78	&26.75	&6.44	(6.75)	&40.54 (52.07)	&91 \\

  &FullEd &7 ICs &\textcolor{blue}{\bf{11.05}}	&\textcolor{blue}{\bf{61.15}}	
  &28.96	&25.21	&\textcolor{blue}{\bf{12.66}} (14.14)	&\textcolor{blue}{\bf{87.53}} (92.07)	&\textcolor{blue}{\bf{99.5}} \\
  \bottomrule
\end{tabular}
\end{table*}

The results on the Training and Test sets under the different controllers are summarized in Table~\ref{table:evaluations}.
Several trends emerge from these results: 
\begin{itemize}
  \item From the Training set, we see that increasing the number of initial conditions \textit{decreases} the improvement. This makes intuitive sense since the algorithm is  looking for the best performance averaged over more conditions.
  \item As expected, there is gap between training and testing performance.  However this gap diminishes as we increase the number of initial conditions considered in the optimization. These results imply that the reshaping function that yields the shortest path depends on where an agent starts but that averaging over a few initial conditions ensures reasonable performance across a wide range of start locations. As seen in the last column, when using seven initial conditions in training, over 90\% of the initial conditions in the test set showed a reduction in path length.
\item Performance improves as additional edges are added to the controller. This likely arises from the additional information available in the extra range measurements.
  \item The four rows afterwards indicate that the training results can transfer well to a other formations, yielding similar improvements in the trajectories (see Fig.\ref{fig:5agents_shapes} for an example). The last four rows show that the training results transfer to a 3-agent system as well. Interestingly, with full range data and three agents (final row), the results are very close to straight lines, reaching nearly 90\% relative improvement.
\end{itemize}

\begin{figure}[htbp]
    \centering
     \subfloat[]{\includegraphics[width=4.0cm]{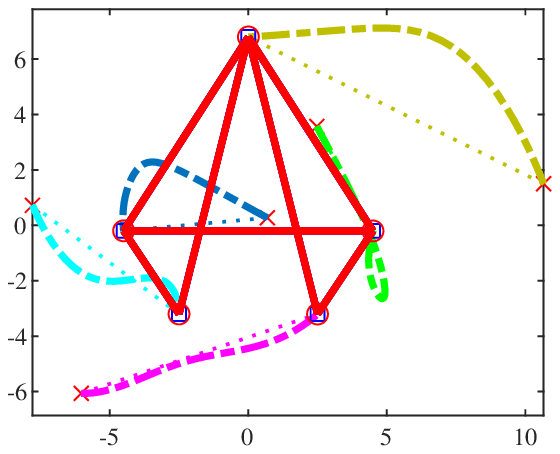}\label{fig:shape1_7par_init}}\quad
   \subfloat[]{\includegraphics[width=4.0cm]{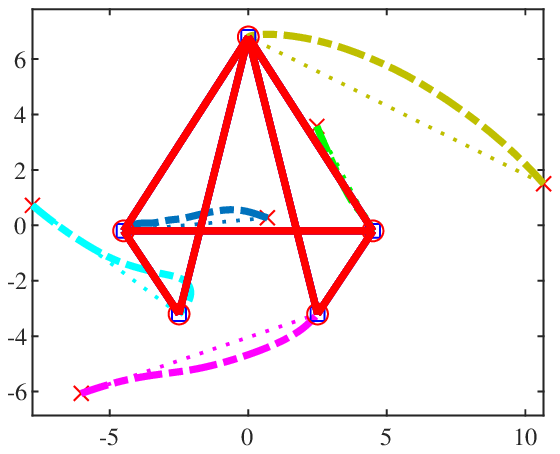}\label{fig:shape1_7par_opt}}\quad
     \caption{Bearing-based 5-agent system where $E_b=E_d$ with alternate goal configuration. (a) before optimization. (b) after.} 
     \label{fig:5agents_shapes}
   \end{figure}

\subsection{Comparison with Other Controllers}
To further demonstrate the performance of our method, we compare it with other methods under both the bearing-only and bearing+range conditions:
\begin{enumerate}
    \item bearing-only ('NoEd') formation controller: we compare with \cite[eq. (8)]{Zhao_bearingOnly::TAC2016}, a distributed bearing-only formation control law that uses a Lyapunov approach.
    \item bearing+range ($|\cE_d|=|\cE_b|$, 'FullEd') formation controller: we compare with \cite[eq. (5)]{Zhao_bearingBased:ECC2015,Zhao:CSM2019}, a distributed bearing-based formation control law that uses relative position measurements and assumes two independently controlled leaders. Note that we adapt our controller to the leader-follower case by simply fixing the position of the leaders (while still training in the leaderless case).
\end{enumerate}

The results in Table \ref{table:comparison} show that our method yields shorter trajectories, and that the parameters learned from our leaderless optimization are still effective when a leader is considered, despite the fact that \cite{Zhao_bearingBased:ECC2015} actually requires the presence of leaders. 


\begin{table}[t] 
  \caption{Comparison: NoEd.Init. and NoEd.Opt. are the bearing-only controllers before and after optimization using 7ICs, while FullEd.Init. and FullEd.Opt. are the ones for bearing+range case. To allow a fair comparison with \protect\cite{Zhao_bearingBased:ECC2015}, methods marked with ${}^\star$ fix the position of two leaders. The reported statistics have the same meaning as in Table \ref{table:evaluations} Test. }
  \label{table:comparison}
  \centering
  \begin{tabular}{p{1.6cm}p{0.75cm}p{0.75cm}p{1.25cm}p{1.25cm}p{0.5cm}}
          \toprule
          &$L_\textrm{path,A}$
          &$L_\textrm{path,B}$
          &$\delta path$(\%) 
          &$\delta diff.$(\%) 
          &$+$\% \\
          \midrule
          \begin{tabular}{@{}l@{}}A: Zhao  \\B: NoEd.Init. \end{tabular} &65.9	&45.2	&29	(28.4)	&51.2	(55.4)	&100 \\
          \midrule
          \begin{tabular}{@{}l@{}}A: Zhao \\B: NoEd.Opt.\end{tabular} & 65.9	&41.4	&34.1	(34.1)	&58.2	(63.6)	&99.5\\
          \midrule
          \begin{tabular}{@{}l@{}}A${}^\star$: Zhao \\B${}^\star$: FullEd.Init. \end{tabular} &35.5	&30.7	&12.8 (12.9)	&42.9	(7.2)	&94.5 \\
          \midrule
          \begin{tabular}{@{}l@{}}A${}^\star$: Zhao \\B${}^\star$: FullEd.Opt.\end{tabular} &35.5	&29.5	&16.3	(16.9)	&58	(56)	&98.5\\
          \bottomrule
  \end{tabular}
  \end{table}


\section{Conclusions}
In this paper, we formulated a nonlinear optimization problem on a gradient-descent bearing-based formation controller, which utilizes a reshaping function to embed the relationship between the current and desired bearing. We simulated the algorithms in Matlab to evalulate their performance on a 5-agent network. The optimization consistently led to the same qualitative form of the reshaping function for bearing-only (NoEd) and full range (FullEd) cases, 
though the pattern did not hold for the one-range (OneEd) case. By applying the optimized reshaping function, the bearing-only path length $L_\textrm{path}$ was shortened by around 8\% and the difference to straight lines $L_\textrm{diff}$ was improved by $\sim$16\% in these simulations. In addition, with radomized initial conditions, over 90\% of the trials runs showed improvement relative to the non-optimized case. Our results indicate that using a larger training set in the optimization leads to a small reduction in performance gain but a significant improvement on maintaining the scale of the formation. By introducing more range terms, both the training and test performance improved, up to 11.3\% in $L_\textrm{path}$ and 65.6\% in $L_\textrm{diff}$. More over, almost all the test samples benefited from the optimization despite the small training set. It is also promising to see that the training result on 5-agent network can  be generalized to other formations and networks with different numbers of agents. For future work, we are planning to (1) use the basic form of the optimized reshaping function to reduce the number of control points needed and thus reduce the complexity of the optimization problem, (2) generalize the optimization approach to include different dynamic models for the agents, (3) propose a collision avoidance solution to remove the problem arising when the range between two agents goes to zero.

\bibliographystyle{IEEEtran}

\bibliography{biblio/IEEEfull,biblio/IEEEConfFull,biblio/OtherFull,
  biblio/zili,%
  biblio/tron,%
  biblio/formationControl,%
  biblio/websites}

\section*{APPENDIX}
\subsection{Proof of Lemma \ref{lemma2}}\label{Appendix1}
 Given \eqref{eq:distance}, \eqref{eq:bearing} and \eqref{eq:edge cost}, we see that $d_{ij}$, $\beta_{ij}$, and $\varphi_{ij}^b$ are invariant to an arbitrary translation $t_{ij}$ (i.e. $d(x_i+t_{ij},x_j+t_{ij})=d(x_i,x_j)$,  $\beta(x_i+t_{ij},x_j+t_{ij})=\beta(x_i,x_j)$,
 $\varphi(x_i+t_{ij},x_j+t_{ij})=\varphi(x_i,x_j)$). Therefore,
 \begin{equation}
    \begin{split}
        \varphi_{ij}^b(\Tilde{x}_i(t),\Tilde{x}_j(t))&=\varphi_{ij}^b(0,\Tilde{x}_j(t)-\Tilde{x}_i(t))\\
        &= \varphi_{ij}^b(0,t(v_j-v_i)).
    \end{split}
\end{equation}
Then, we can move and scale the desired configuration $x_{g,i}$ and $x_{g,j}$ such that $x_{g,i}=(0,0)$, $x_{g,j}=(1,0)$, and $\beta_{g,ij}=(1,0)$. Given $c_{ij}=\beta_{g,ij}^T \beta_{ij}$, we define $v_j-v_i=(c_{ij},\sqrt{1-c_{ij}^2})=\beta_{ij}$. Letting $v_i=(0,0)$, $v_j=(c_{ij},\sqrt{1-c_{ij}^2})$, and fixing agent $i$, the original evaluation on $(\Tilde{x}_i(t),\Tilde{x}_j(t))$ is simplified to evaluate the location of agent $j$ on a radial line starting from the origin. Here, varying $v_j$ is analogous to varying the bearing vector of agent $i$ and $j$ in a unit circle centering at the origin.  
      
It then follows that on the parametric line $(0, tv_j)$, we have
\begin{align}
        v_j&=\frac{\Tilde{x}_j-\Tilde{x}_i}{\norm{\Tilde{x}_j-\Tilde{x}_i}}=\Tilde{\beta}_{ij},\\
    \dot{\Tilde{d}}_{ij}&=\frac{d}{dt}\norm{tv_j-0}=v_j^T \Tilde{\beta}_{ij}.\label{eq:}
\end{align}
Given we are evaluating the terms along a bearing direction, the derivative of $\Tilde{\beta}_{ij}$ on $t$ is a zero vector, and the corresponding $\dot{\Tilde{c}}_{ij}=\beta_{g,ij}^T \dot{\Tilde{\beta}}_{ij}$ is also zero. Then,
\begin{equation}
\begin{split}
    \dot{\Tilde{\varphi}}_{ij}^b &= f_b(\Tilde{c}_{ij}) \dot{\Tilde{d}}_{ij} + \dot{f}_b(\Tilde{c}_{ij}) d_{ij}\dot{\Tilde{c}}_{ij}
    = f_b(\Tilde{c}_{ij}) v_j^T \Tilde{\beta}_{ij}\\ &=f_b(\Tilde{c}_{ij})\Tilde{\beta}_{ij}^T \Tilde{\beta}_{ij}=f_b(\Tilde{c}_{ij}).
    \end{split}
\end{equation}

Given the constraint \eqref{eq:bearing constraints_original1} on $f_b(\cdot)$, we have $\dot{\Tilde{\varphi}}^b_{ij}(t)\geq 0, \text{  with equality if and only if } c_{ij}=1$.

\subsection{Insensitive Weights in the Optimization Objective Terms}\label{Appendix2}
To see this, note that we do not have a regularization term on $\alpha$. As a consequence, we expect that the second term with $\varphi(x(T))$ in the cost \eqref{eqn:objective function} will always be negligible for any reasonable choice of $\omega$. This is because $f_b(\cdot)$, $f'_b(\cdot)$, and hence the control $\vct{u}$, are homogeneous in $\alpha$, which means by scaling $\alpha$ we can maintain the same paths $\vct{x}(t)$, but change the speed; this then implies that the first term in \eqref{eqn:objective function} does not depend on the scale of $\alpha$, but only on its direction. If, by way of contradiction, we had a solution $\alpha_\textrm{opt}$ which was optimal but for which $\varphi(\vct{x}(T)\gg 0$, then we could simply augment the scale of $\alpha$ to make the agents go faster, thus reducing the total cost, giving a contradiction with the fact that $\alpha_\textrm{opt}$ is optimal. 
\subsection{Possible Solution of Discontinuous Sensitivity Funtion}\label{Appendix3}
A (possibly) better approach to handle the discontinuity when the range between two agents goes to zero is to introduce a \textit{bump function} $\phi(d_i)=\prod_{(i,j)\in \cE_b} \phi_e(d_{ij})$ where $d_i=\{d_{ij}\}_{j:(i,j)\in \cE_b}$ indicates the ranges to all agents connected to agent $i$, and
    \begin{equation*}
      \phi_e(d_{ij}) = \begin{cases}
      1-\exp (1+  \frac{1}{ (\frac{ d_{ij} }{ \epsilon })^{2p}  -1}  ), &\text{if $d_{ij}<\epsilon$},\\
      1, &\text{otherwise},
      \end{cases}
      \end{equation*}
      where $\epsilon > 0$ is a given small number and $p$ is an appropriate power. We then modify the path length into $\int^T_0 \phi(d_i)||\dot{x}_i (\vct{x}(t),\vct{\alpha)}|| \de t$. 
     \begin{figure}[htbp]
      \centering
       \subfloat[]{\includegraphics[width=4.05cm]{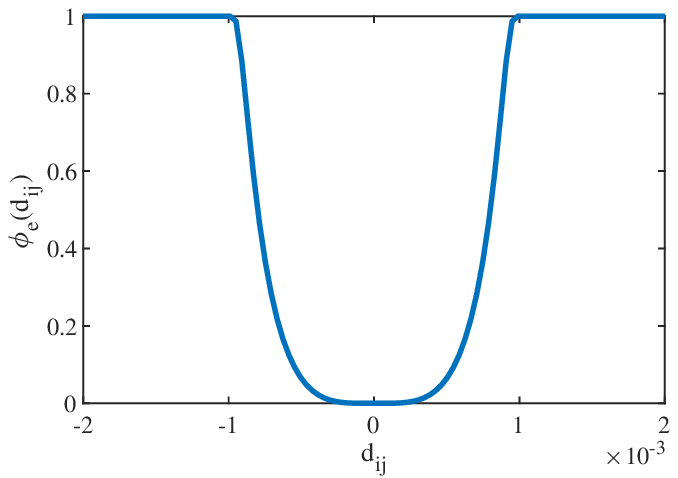}}\quad
       \subfloat[]{\includegraphics[width=4.05cm]{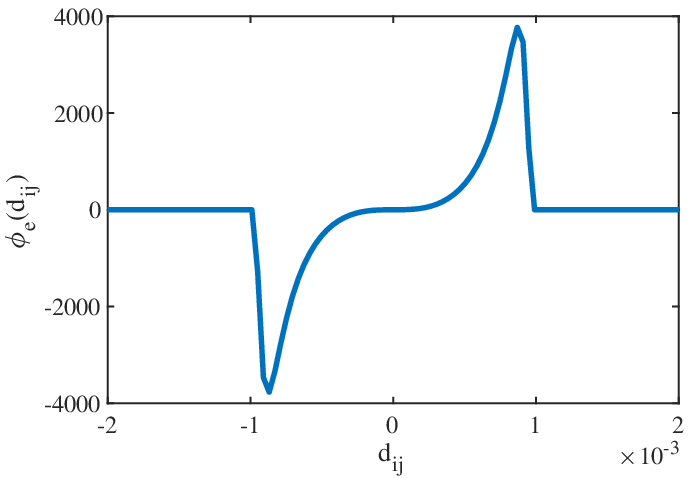}}\quad
       \caption{Bump Function ($p=2$, $\epsilon=10^3$)} 
       \label{fig:bump_function}
     \end{figure}
     
     Since $\phi$ and its derivative are both well behaved (and zero) as two agents move through a common point, including this function should help the sensitivity well-behaved. Intuitively, because the bump function drives the path length term to zero independent of $\alpha$, the integral can be broken into two terms, one before and one after the intersection of the agents. Making this idea rigorous is a topic of ongoing research.
     



\end{document}